\documentclass[conference]{IEEEtran}
\usepackage{graphicx}
\usepackage{cite}
\usepackage{enumerate}
\usepackage{array}
\usepackage{amsmath}
\usepackage{amssymb}
\usepackage{subeqnarray}
\usepackage{cases}
\usepackage{bm}
\usepackage{multirow}
\usepackage{epsfig}
\usepackage{epstopdf}
\usepackage{xcolor}
\usepackage{subfigure}

\newtheorem{theorem}{Theorem}
\newtheorem{lemma}{Lemma}
\newtheorem{proposition}{Proposition}

\newenvironment{proof}{{\bf Proof:}}{\hfill\rule{2mm}{2mm}}

\newcommand{\E}[1]{\mathrm{E}\left[#1\right]}

\begin{document}
\title{Beyond Age: Urgency of Information for Timeliness Guarantee in Status Update Systems}
\author{\IEEEauthorblockN{Xi~Zheng, Sheng~Zhou, Zhisheng~Niu}

\IEEEauthorblockA{Beijing National Research Center for Information Science and Technology\\
Department of Electronic Engineering, Tsinghua University, Beijing 100084, P.R. China\\
zhengx14@mails.tsinghua.edu.cn,
\{sheng.zhou, niuzhs\}@tsinghua.edu.cn}}

\maketitle

\begin{abstract}
Timely status updating is crucial for future applications that involve remote monitoring and control, such as autonomous driving and Industrial Internet of Things (IIoT). Age of Information (AoI) has been proposed to measure the freshness of status updates. However, it is incapable of capturing critical systematic context information that indicates the time-varying importance of status information, and the dynamic evolution of status. In this paper, we propose a context-based metric, namely the Urgency of Information (UoI), to evaluate the timeliness of status updates. Compared to AoI, the new metric incorporates both time-varying context information and dynamic status evolution, which enables the analysis on context-based adaptive status update schemes, as well as more effective remote monitoring and control. The minimization of average UoI for a status update terminal with an updating frequency constraint is investigated, and an update-index-based adaptive scheme is proposed. Simulation results show that the proposed scheme achieves a near-optimal performance with a low computational complexity. 
\end{abstract}

\section{Introduction}
Emerging applications in next-generation networks require timely and reliable status updates to reduce information asymmetry and enable collaboration among various ends \cite{network}\cite{tactile}. Taking vehicular networks as an example, vehicles need to timely exchange position, velocity, acceleration and driving intention information to enable driving assistance applications, such as collision avoidance, intersection scheduling, and even autonomous driving. 
The timeliness requirement for status updates varies according to the non-uniform status evolution. Generally speaking, when the status changes rapidly, more frequent information delivery is required. Timeliness requirement is also related to context information \cite{context}\cite{context-survey}, which includes all the knowledge about the underlying system that determines how important the status is. For example, when the system is at a critical situation, its status should be more frequently updated. Otherwise, insufficient status updates will hinder the effectiveness of the applications, resulting in unacceptable performance degradation. On the other hand, excessive status information deliver will bring little marginal performance gain while consuming extra wireless and energy resources. Therefore, to ensure the timeliness of information delivery and the effectiveness of status-based control, status updates should adapt to the context of the system and the non-uniform evolution of status. 

To cope with varying context and the non-uniform status evolution, this paper proposes a new metric named the \emph{Urgency of Information} (UoI), defined as the product of context-aware weight and the systematic cost resulted from the inaccuracy of status estimation. The context-aware weight represents the importance of status information, while the cost identifies how inaccurate existing information is. Therefore, UoI is able to characterize the timeliness of status updates by differentiating the importance of status according to context and considering the non-uniform changes in status. 

There have been several metrics proposed to evaluate the timeliness of status updates. Age of information (AoI) \cite{aoi} is defined as the time elapsed since the generation of the most up-to-date packet received. It focuses on the freshness of information, which is linear with time, and is irrelevant to context and status evolution. AoI has been studied extensively in recent literatures \cite{aoi}--\!\!\cite{NegExpo}. To overcome the linearity limitation of AoI, ref. \cite{8000687} characterizes the non-linear loss caused by information staleness as a function of AoI. However, by investigating the mean square error (MSE) minimization in the remote estimation \cite{wiener}--\!\!\cite{EffectiveAge}, it is proved that AoI-based status update scheme is not MSE-optimal. Authors of \cite{AoS} propose Age of Synchronization (AoS) to measure the length of time when the latest status information is not synchronized with the actual status, which is useful in applications like caching design. A similar concept named Age of Incorrect Information (AoII) is also introduced in \cite{AoII}. However, none of these metrics takes context into consideration. 

In this paper, we investigate how to exploit context information and real-time status information to design adaptive status update schemes. The contributions of this paper are summarized as follows:
\begin{enumerate}
\item A new metric, namely the Urgency of Information, is proposed, through which both context-based importance and the non-uniform evolution of status are incorporated to represent the timeliness of status updates. 
\item An adaptive updating scheme to reduce the UoI of a status update terminal with an average updating frequency constraint is proposed. The adaptive scheme decides on whether to sends a status update by comparing a history-related virtual queue length and a present-based update index, and is proved to obtain a bounded UoI. It is shown that the adaptive scheme is able to achieve a near-optimal performance in the simulations. 
\end{enumerate}

The remainder of this paper is organized as follows. Section II introduces the concept of UoI. The problem of status updates with a constrained average updating frequency is formulated in Section III. The design of an adaptive updating scheme is investigated in Section IV. In Section V, the performance of various policies is illustrated with simulation results. Section VI concludes the paper. 
                                                                                                                                                                                                                 
\section{Urgency of Information: A New Metric}
The timeliness of status updates is defined by the performance of the underlying remote monitoring and control applications. We first introduce the concept of urgency of information, then investigate the control performance of a fundamental remote control system to get a preliminary understanding of how the control performance is related to the staleness of status information. 

\subsection{Definition of UoI}
Denoting the difference between the actual status $x(t)$ and the estimated status $\hat{x}(t)$ at time $t$ by $Q(t) = x(t) - \hat{x}(t)$, the system cost caused by the inaccuracy of status estimation is defined by $\delta(Q(t))$, where $\delta(\cdot)$ is a non-negative even function (e.g., norms, quadratic function). The context-aware importance of status is evaluated by weight $\omega(t)$: if the context in the system presents a high requirement on estimation accuracy, the corresponding weight $\omega(t)$ will be large, and vice versa. 

To characterize the context-aware timeliness for status updates in remote control systems, we propose a new metric named the \emph{Urgency of Information} (UoI), which is previously named context-aware information lapse in \cite{previous}. The UoI is defined as the product of cost $\delta(Q(t))$ and context-aware weight $\omega(t)$. Mathematically, the UoI is expressed as
\begin{equation}
\label{eqn:def}
F(t) = \omega(t)\delta(Q(t)). 
\end{equation}
Denoting the temporal derivative $\frac{\mathrm{d}}{\mathrm{d}t}Q(t)$ of error by $A(t)$, error $Q(t)$ is equivalently written as 
\begin{eqnarray}
\label{eqn:q1}
Q(t) = \int_{g(t)}^tA_\tau\mathrm{d}\tau,
\end{eqnarray}
where $g(t)$ is the generation time stamp of the most up-to-date status update packet that is received before $t$. Note that if the cost $\delta(Q(t))$ increases in a constant rate (i.e., $A(t) = 1$) over time and the weight $\omega(t)$ is time-invariant, the UoI is equivalent to the conventional AoI. 
Moreover, there are other cases where UoI is equivalent to existing metrics: 
\begin{enumerate}
\item If the weight $\omega(t)$ is time-invariant, $A(t) = 1$ and mapping $\delta(\cdot)$ is defined as an AoI-penalty functions over $\mathbb{R}_+$, the UoI equals to the non-linear AoI defined in \cite{8000687}. 
\item If the weight $\omega(t)$ is time-invariant and $\delta(x) = x^2$, the UoI equals to the squared error. 
\end{enumerate}

The discrete-time version of the UoI is accordingly formulated as $F_t = \omega_t\delta\left(\sum_{\tau=g_t}^{t-1}A_{\tau}\right)$, where $A_t$ is the increment of error in time slot $t$. The recursive relationship is expressed as 
\begin{equation}
\label{eqn:dis}
Q_{t+1} = \left(1-D_t\right)Q_t + A_t + D_t\sum_{\tau=g_{t+1}}^{t-1}A_{\tau}, 
\end{equation}
where $D_t=1$ indicates that there is a successful status delivery in the $t$-th slot; otherwise $D_t = 0$. 

When the time-scale of status updates is much larger than the packet transmitting time, i.e., status information can be instantaneously obtained and delivered by the terminal whenever it is scheduled. In this case, Eq. (\ref{eqn:dis}) is written as $Q_{t+1} = \left(1-D_t\right)Q_t + A_t$.
\subsection{Tracking Control of Linear Systems}

Consider a remote control system where a controller remotely sends control actions to a terminal based on previous status feedbacks. The status of the terminal at the $t$-th time slot is denoted as $x_t$, and the dynamic function of status evolution is $x_t = ax_{t-1} + bv_t + r_t$, where $v_t$ is the control variable at the $t$-th time slot, and noise $r_t$ is an independent and identically distributed (i.i.d.) random variable with finite variance. Without loss of generality, it is assumed that the expected value of $r_t$ is zero. The controller decides on $v_t$ in order to keep terminal status $\{x_t|t\in\mathcal{N}\}$ as close to the desired status $\{y_t|t\in\mathcal{N}\}$ as possible. The control performance is evaluated by the weighted squared tracking error $\omega_t(x_t-y_t)^2$, where the context-based weight $\omega_t$ implies the importance of control performance at the $t$-th time slot. The objective is to minimize the weighted squared tracking error:
\begin{eqnarray}
\label{program:LQC}
&\min_{v_t}	& \limsup_{T\to\infty}\frac{1}{T}\sum_{t=0}^{T-1}\E{\omega_{t}\left(x_t - y_t\right)^2}. 
\end{eqnarray}
Due to information staleness, the controller might not be able to obtain current status $x_t$. At each time slot, the controller first estimates the status of the terminal based on historical status updates and control actions, and makes the optimal control decision $v^*_t$ based on the estimation $\hat{x}_{t-1}$. 

\begin{proposition}
The original problem (\ref{program:LQC}) is equivalent to minimizing the weighted status estimation error $$\limsup_{T\to\infty}\frac{1}{T}\sum_{t=0}^{T-1}\E{\omega_{t}\left(x_{t-1} - \hat{x}_{t-1}\right)^2}.$$
\end{proposition}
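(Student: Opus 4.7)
The plan is to exploit the linear–quadratic–Gaussian structure: derive the optimal (certainty-equivalent) control $v_t^\ast$, substitute it into the instantaneous tracking error, and show that the resulting cost splits into the weighted estimation error plus an additive term that depends neither on the control nor on the update policy.

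First I would fix the time slot $t$ and observe that, conditional on the estimate $\hat{x}_{t-1}$, the controller seeks to minimize $\E{\omega_t(x_t-y_t)^2\mid\hat{x}_{t-1}}$ by choosing $v_t$. Substituting $x_t=ax_{t-1}+bv_t+r_t$ and using $\E{r_t}=0$ together with the independence of $r_t$ from the past, the first-order condition yields the certainty-equivalent controller
\begin{equation*}
v_t^\ast=\frac{y_t-a\hat{x}_{t-1}}{b}.
\end{equation*}
Plugging this back into the dynamics gives the clean identity
\begin{equation*}
x_t-y_t=a\bigl(x_{t-1}-\hat{x}_{t-1}\bigr)+r_t.
\end{equation*}

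Next I would square both sides and take expectations. Because $r_t$ is zero-mean and independent of $x_{t-1}$, $\hat{x}_{t-1}$ and of the context weight $\omega_t$, the cross term $2a\,\E{\omega_t(x_{t-1}-\hat{x}_{t-1})r_t}$ vanishes, yielding
\begin{equation*}
\E{\omega_t(x_t-y_t)^2}=a^2\,\E{\omega_t(x_{t-1}-\hat{x}_{t-1})^2}+\sigma_r^{2}\,\E{\omega_t},
\end{equation*}
where $\sigma_r^2=\mathrm{Var}(r_t)$. Averaging over $t=0,\ldots,T-1$ and letting $T\to\infty$, the term $\sigma_r^2\cdot\limsup\frac{1}{T}\sum_{t=0}^{T-1}\E{\omega_t}$ is a constant that neither the controller $\{v_t\}$ nor the status-update policy can influence, so minimizing the original objective (\ref{program:LQC}) is equivalent (up to this additive constant and the positive multiplicative constant $a^2$) to minimizing $\limsup_{T\to\infty}\frac{1}{T}\sum_{t=0}^{T-1}\E{\omega_t(x_{t-1}-\hat{x}_{t-1})^2}$.

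The main obstacle I anticipate is justifying the certainty-equivalence step rigorously: one must argue that at each time the controller has no incentive to probe (since $v_t$ does not affect the quality of $\hat{x}_{t-1}$, which depends only on received updates) and that the noise $r_t$ is independent of both the weight $\omega_t$ and the estimate $\hat{x}_{t-1}$, so that the cross term truly vanishes. Once this independence structure is made explicit, the remaining algebra is a one-line expansion and the equivalence follows immediately.
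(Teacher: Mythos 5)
Your proposal is correct and follows essentially the same route as the paper's own proof: derive the certainty-equivalent controller $v_t^\ast=(y_t-a\hat{x}_{t-1})/b$ from the first-order condition, substitute it back so that $x_t-y_t=a(x_{t-1}-\hat{x}_{t-1})+r_t$, and observe that the cost splits into $a^2$ times the weighted estimation error plus a policy-independent noise term. Your added remarks on why certainty equivalence is legitimate (no dual effect, independence of $r_t$ from $\hat{x}_{t-1}$ and $\omega_t$) make explicit what the paper leaves implicit, but the argument is the same.
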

\begin{proof}
See Appendix A. 
\end{proof}

According to Proposition 1, the tracking problem of linear systems is equivalent to minimizing weighted estimation error in remote monitoring. The estimation error comes from the unpredictable nature of terminal status. As long as the wireless channel is not perfect or channel resource is limited, the controller can not get full knowledge of the actual status information each time it makes control actions, which harms the effectiveness of control. To reduce the estimation error, a sophisticated status updating scheme is in need to deliver status information timely by adapting to the context information and status evolution. 

\section{System Model}
Consider a wireless communication system with a terminal that constantly updates its status to a fusion center, as illustrated in Fig. \ref{fig:single}. Due to limited channel resource and energy supply, the frequency of sending status updates to the fusion can not exceed $\rho$. The decision of status updates at the $t$-th time slot is denoted by $U_t$, where $U_t=1$ means that the terminal sends its status at the $t$-th slot; otherwise $U_t=0$. Status packets are transmitted through a block fading channel with success probability $p$. The state of channel being good at the $t$-th slot is represented by $S_t=1$; otherwise $S_t=0$. 

\begin{figure}
\centering
\includegraphics[width=3in]{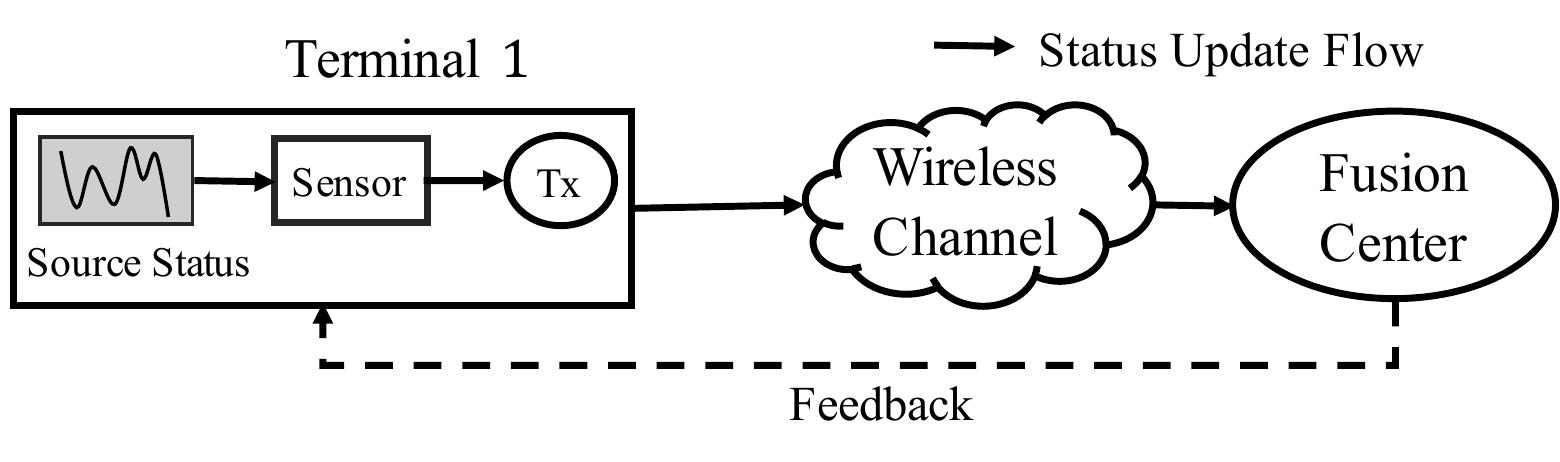}
\caption{A terminal updates its status to a fusion center. }
\label{fig:single}
\end{figure}

Due to the randomness in status evolution, the increment of estimation error at the fusion center is not uniform. It is assumed that the increment $A_t$ of error has zero mean (without loss of generality) and variance $\sigma^2$. It is further assumed that $A_t$ is independent of the current estimation error $Q_t$. An example that satisfies these assumptions is that $A_t$ is an i.i.d Gaussian random variable if the monitored status follows Wiener process. The time-varying context-aware weight $\omega_t$ is assumed to be a random variable that has mean $\bar{\omega}$, and is independent of error $Q_t$. 

To improve the timeliness of status updates, the terminal adaptively allocates its transmission to reduce the average UoI. The average UoI minimization problem is formulated as
\begin{subequations}
\label{program:main-single}
\begin{align}
\min_{U_t} &\quad\limsup_{T\to\infty}\frac{1}{T}\E{\sum_{t=0}^{T-1}\omega_tQ_t^2}\label{obj:single}\\
\mathrm{s.t.}	&\quad \limsup_{T\to\infty}\frac{1}{T}\sum_{t=0}^{T-1}\E{U_t} \leq \rho,\label{eqn:con2}
\end{align}
\end{subequations}
where $Q_{t+1} = \left(1 - U_tS_t\right)Q_t + A_t$. 

\section{Status Updating Scheme}
In order to satisfy the average updating frequency constraint (\ref{eqn:con2}), a virtual queue $H_t$ is defined as $H_{t+1} = \left[H_t - \rho + U_t\right]^+$. When the terminal decides to send a status update, the virtual queue $H_t$ increases by $1-\rho$ as a result; otherwise the virtual queue decreases by $\rho$. Therefore, the length of virtual queue is able to characterize historical usage of wireless and energy resource. More specifically, we have the following lemma:
\begin{lemma}
\label{lemma:mrs}
With $H_0 < \infty$, Eq. (\ref{eqn:con2}) is satisfied as long as virtual queue $H_t$ is mean rate stable, i.e., $\lim_{T\to\infty}\frac{\E{H_T}}{T} = 0. $
\end{lemma}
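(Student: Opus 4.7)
The plan is to use the standard Lyapunov / virtual queue argument: the $[\,\cdot\,]^+$ operator in the update rule $H_{t+1}=[H_t-\rho+U_t]^+$ can only add mass, so it gives a one-sided inequality that telescopes cleanly. This should let me bound the empirical update rate by a quantity that vanishes under mean rate stability.

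Concretely, first I would drop the positive-part operator to get $H_{t+1}\geq H_t-\rho+U_t$ for every sample path, which rearranges to $U_t-\rho\leq H_{t+1}-H_t$. Second, I would sum this telescoping inequality over $t=0,1,\dots,T-1$, obtaining $\sum_{t=0}^{T-1}(U_t-\rho)\leq H_T-H_0$. Third, I would take expectations on both sides and divide by $T$, yielding
\begin{equation*}
\frac{1}{T}\sum_{t=0}^{T-1}\mathrm{E}[U_t]\;\leq\;\rho+\frac{\mathrm{E}[H_T]}{T}-\frac{H_0}{T}.
\end{equation*}
Finally, I would take $\limsup_{T\to\infty}$ on both sides: the term $H_0/T\to 0$ because $H_0<\infty$ by hypothesis, and $\mathrm{E}[H_T]/T\to 0$ by the assumed mean rate stability, so the right-hand side converges to $\rho$, which is exactly the constraint (\ref{eqn:con2}).

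There is essentially no obstacle here; the only subtlety worth double-checking is the one-sided inequality $[x]^+\geq x$, which is what makes the telescoping work in the correct direction (if the inequality went the other way, mean rate stability alone would not suffice). Everything else is elementary linearity of expectation and a limit interchange that is trivially justified since the bound holds deterministically before taking expectations.
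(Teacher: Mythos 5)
Your proof is correct and follows essentially the same route as the paper's: drop the $[\,\cdot\,]^+$ to get $H_{t+1}\geq H_t-\rho+U_t$, telescope, take expectations, divide by $T$, and invoke mean rate stability together with $H_0<\infty$. Your version is in fact slightly more careful in using $\limsup$ rather than asserting a limit exists, but the substance is identical.
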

\begin{proof}
See Appendix B. 
\end{proof}

To obtain a feasible updating scheme, define Lyapunov function as $L_t =  \frac{1}{2}VH^2_t + \theta Q_t^2$, where $\theta$ and $V$ are positive real numbers. The Lyapunov drift is defined as $\Delta_t = \E{L_{t+1} - L_t|Q_t, \omega_{t+1}, H_t}$.

\begin{lemma}
\label{lemma:bound-single}
Denote the penalty at the $t$-th time slot as $f_t$. If $\E{L_0}<\infty$, $\E{f_t}\geq \bar f_\mathrm{min}$ and $\forall t\in\{0,1,2\cdots\}$, $\E{L_{t+1} - L_t + f_t} \leq C$, then the virtual queue $H_t$ is mean rate stable, and
\begin{equation}
\label{eqn:bound-single}
\limsup_{T\to\infty}\frac{1}{T}\sum_{t=0}^{T-1}\E{f_t} \leq C. 
\end{equation}
\end{lemma}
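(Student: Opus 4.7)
The plan is the standard Lyapunov drift-plus-penalty telescoping argument. The key observation is that the hypothesis $\E{L_{t+1} - L_t + f_t} \le C$ holds for every $t$, so summing it from $0$ to $T-1$ produces a single telescoped inequality that simultaneously controls the time-average penalty and the growth of $\E{L_T}$.

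First, I would sum the per-slot drift-plus-penalty bound over $t=0,1,\ldots,T-1$ and cancel the telescoping terms to obtain
\begin{equation*}
\E{L_T} - \E{L_0} + \sum_{t=0}^{T-1}\E{f_t} \le CT.
\end{equation*}
Since $L_T = \tfrac{1}{2}VH_T^2 + \theta Q_T^2 \ge 0$, dropping $\E{L_T}$ gives $\sum_{t=0}^{T-1}\E{f_t} \le CT + \E{L_0}$. Dividing by $T$ and using $\E{L_0} < \infty$ yields (\ref{eqn:bound-single}) after taking $\limsup_{T\to\infty}$.

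Second, for mean rate stability I would rearrange the same telescoped inequality as
\begin{equation*}
\E{L_T} \le CT + \E{L_0} - \sum_{t=0}^{T-1}\E{f_t}
\le (C-\bar{f}_{\min})T + \E{L_0},
\end{equation*}
using the assumed lower bound $\E{f_t}\ge \bar f_{\min}$. Since $L_T \ge \tfrac{1}{2}VH_T^2$, this gives $\E{H_T^2} \le \frac{2(C-\bar f_{\min})T + 2\E{L_0}}{V}$ when $C>\bar f_{\min}$, and a $T$-independent bound otherwise. Applying Jensen's inequality $\E{H_T}^2 \le \E{H_T^2}$ then yields $\E{H_T} = O(\sqrt{T})$ (or $O(1)$), so $\E{H_T}/T \to 0$, i.e., $H_t$ is mean rate stable.

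There is no serious obstacle here; the argument is almost entirely algebraic once the drift-plus-penalty inequality is telescoped. The only point requiring mild care is to notice that the same telescoped inequality serves both conclusions: dropping $\E{L_T}$ on one side gives the penalty bound, while using the lower bound on $\E{f_t}$ on the other side gives the queue growth estimate. The finiteness of $\E{L_0}$ and $\bar f_{\min}$ ensures all residual terms are $o(T)$, which is what makes $\limsup$ collapse to $C$ and $\E{H_T}/T$ collapse to $0$.
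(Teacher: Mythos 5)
Your proposal is correct and follows essentially the same route as the paper's proof in Appendix C: telescope the drift-plus-penalty inequality, drop $\E{L_T}\ge 0$ to get the penalty bound, and use $\E{f_t}\ge\bar f_{\min}$ together with $L_T\ge\tfrac{1}{2}VH_T^2$ and Jensen's inequality to get $\E{H_T}=O(\sqrt{T})$ and hence mean rate stability. The only (harmless) difference is that you present the two conclusions in the opposite order and add a brief remark on the sign of $C-\bar f_{\min}$.
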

\begin{proof}
See Appendix \ref{sec:lemma2}. 
\end{proof}

Lemma \ref{lemma:bound-single} indicates that the problem of reducing average penalty while keeping the virtual queue stable can be transfered to reducing the sum of expected value of Lynapunov drift and expected penalty at each time slot. As long as the sum is no larger than a constant, the virtual queue is mean rate stable, and the average penalty is upper-bounded. 

\begin{lemma}
\label{lemma:drift-single}
With penalty $f_t = \omega_{t+1}Q_{t+1}^2$, which is the UoI of the next time slot, the sum of Lynapunov drift and expected penalty $f_t$ satisfies
\begin{eqnarray}
\label{eqn:right-single}
&~&\Delta_t + \E{f_t|Q_t, \omega_{t+1}, H_t}\notag\\
&\leq& (\omega_{t+1}+\theta)\sigma^2 + \frac{1}{2}V - V\rho H_t + \omega_{t+1}Q_t^2\notag\\
&~& + (VH_t - (\omega_{t+1}+\theta) pQ_t^2)\E{U_t|Q_t, \omega_{t+1}, H_t}.
\end{eqnarray}
\end{lemma}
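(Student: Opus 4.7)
The plan is to expand the Lyapunov drift $\Delta_t$ by handling the virtual queue piece $\tfrac{1}{2}V(H_{t+1}^2-H_t^2)$ and the error piece $\theta(Q_{t+1}^2-Q_t^2)$ separately, then add in the penalty $\mathbb{E}[f_t\mid Q_t,\omega_{t+1},H_t]$, take conditional expectations, and collect terms to match the right-hand side of (\ref{eqn:right-single}).

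First I would handle the queue. Using the elementary bound $([x]^+)^2\le x^2$ applied to $H_{t+1}=[H_t-\rho+U_t]^+$, I get $H_{t+1}^2\le H_t^2+(U_t-\rho)^2+2H_t(U_t-\rho)$. Because $U_t\in\{0,1\}$ and $\rho\in[0,1]$, $(U_t-\rho)^2\le 1$, so $\tfrac{1}{2}V(H_{t+1}^2-H_t^2)\le \tfrac{1}{2}V - V\rho H_t + VH_t U_t$. This already produces the $\tfrac{1}{2}V - V\rho H_t$ constants and isolates the $VH_t\,\mathbb{E}[U_t\mid\cdot]$ contribution.

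Next I would handle the error term. Since $Q_{t+1}=(1-U_tS_t)Q_t+A_t$ and the indicator identity $(1-U_tS_t)^2=1-U_tS_t$ holds, I get
\begin{equation*}
Q_{t+1}^2=(1-U_tS_t)Q_t^2+2(1-U_tS_t)Q_tA_t+A_t^2.
\end{equation*}
Taking conditional expectations, using $\mathbb{E}[A_t]=0$, $\mathbb{E}[A_t^2]=\sigma^2$, $A_t$ independent of $Q_t$, and $\mathbb{E}[S_t]=p$ with $S_t$ independent of everything else, yields
\begin{equation*}
\mathbb{E}[Q_{t+1}^2\mid Q_t,\omega_{t+1},H_t,U_t]=(1-pU_t)Q_t^2+\sigma^2.
\end{equation*}
Multiplying by $\theta$ gives $\theta\sigma^2-\theta p Q_t^2 U_t$ for the Lyapunov error contribution. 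The penalty $\mathbb{E}[f_t\mid\cdot]=\omega_{t+1}\mathbb{E}[Q_{t+1}^2\mid\cdot]$ is handled identically, yielding $\omega_{t+1}Q_t^2+\omega_{t+1}\sigma^2-\omega_{t+1}p Q_t^2 U_t$, where $\omega_{t+1}$ is treated as a known constant under the conditioning.

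Finally I would sum the three contributions and take the outer expectation over $U_t$ conditional on $(Q_t,\omega_{t+1},H_t)$. Collecting the constant $(\omega_{t+1}+\theta)\sigma^2+\tfrac{1}{2}V$, the queue-length term $-V\rho H_t$, the present-error term $\omega_{t+1}Q_t^2$, and factoring the coefficient of $\mathbb{E}[U_t\mid Q_t,\omega_{t+1},H_t]$ as $VH_t-(\omega_{t+1}+\theta)pQ_t^2$ reproduces (\ref{eqn:right-single}). The only real subtlety is the bookkeeping about which variables are measurable with respect to the conditioning $\sigma$-algebra; the algebraic simplifications $U_t^2=U_t$ and $(1-U_tS_t)^2=1-U_tS_t$ make the computation otherwise mechanical, so I do not expect a substantive obstacle beyond that bookkeeping.
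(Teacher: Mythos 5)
Your proposal is correct and follows essentially the same route as the paper: it splits the drift into the virtual-queue piece (bounded via $([x]^+)^2\le x^2$ and $(U_t-\rho)^2\le 1$, matching the paper's inequality for $H_{t+1}^2-H_t^2$) and the error piece (expanded with $(1-U_tS_t)^2=1-U_tS_t$, matching the paper's identity for $Q_{t+1}^2-Q_t^2$), then takes conditional expectations and adds the penalty term. The only difference is that you write out the computation of $\E{f_t\mid Q_t,\omega_{t+1},H_t}$ explicitly, which the paper leaves implicit in its final "add the penalty to both sides" step.
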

\begin{proof}
See Appendix \ref{sec:lemma3}. 
\end{proof}

By selecting $U_t \in \{0,1\}$, we obtain an updating scheme that minimizes the right-hand side of the above inequality at each time slot. 
Next, we optimize parameter $\theta$ to reduce the right-hand side of (\ref{eqn:right-single}). Note that the stationary randomized policy that independently updates with probability $\rho$ at each time slot is also a feasible policy. Substituting the randomized policy into the right-hand side of (\ref{eqn:right-single}) yields
\begin{eqnarray*}
\Delta_t + \E{f_t|Q_t, \omega_{t+1}, H_t}
&\leq& (\omega_{t+1}+\theta)\sigma^2 + \frac{1}{2}V \notag\\
&~&+ \left(\omega_{t+1}(1-p\rho) - \theta p\rho\right)Q_t^2.
\end{eqnarray*}
Since weight $\omega_{t+1}$ is independent to estimation error $Q_t$, taking expectation and letting $\theta = \frac{\bar\omega(1-p\rho)}{p\rho} + \theta_0,\theta_0\geq0$ yields
\begin{eqnarray}
\label{eqn:bound-final-single}
\E{L_{t+1} - L_t + f_t|Q_t}
&\leq& \frac{\bar\omega\sigma^2}{p\rho} + \frac{1}{2}V + \theta_0\sigma^2.
\end{eqnarray}
Since the right-hand side is increasing as $\theta_0$ grows, setting $\theta_0=0$ yields the lowest bound. Then the scheme becomes
\begin{subequations}
\label{p:main-single}
\begin{align}
\min_{U_t}&\quad\left(VH_t - \left(\omega_{t+1}-\bar\omega+\frac{\bar\omega}{p\rho}\right)pQ_t^2\right)U_t\label{eqn:policy-obj}\\
\mathrm{s.t.}&\quad U_t \in \{0,1\}.
\end{align}
\end{subequations}

\begin{proposition}
Define the \emph{update index} $J_t$ as $J_t=\left(\omega_{t+1}-\bar\omega+\frac{\bar\omega}{p\rho}\right)pQ_t^2$.
The solution to policy (\ref{p:main-single}) is
\begin{eqnarray}
\label{eqn:u-single}
U_t=\left\{
\begin{aligned}
&1, &&\mathrm{~if~}J_t > VH_t,\\
&0, &&\mathrm{~if~}J_t \leq VH_t.
\end{aligned}
\right.
\end{eqnarray}
\end{proposition}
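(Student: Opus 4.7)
The plan is to observe that the optimization (\ref{p:main-single}) is a one-variable linear program over the two-point set $\{0,1\}$, so its solution is obtained by inspecting the sign of the coefficient of $U_t$. Writing the objective as $(VH_t - J_t)\,U_t$ with $J_t=\bigl(\omega_{t+1}-\bar\omega+\bar\omega/(p\rho)\bigr)pQ_t^2$, the quantities $VH_t$ and $J_t$ are both deterministic given the conditioning information $(Q_t,\omega_{t+1},H_t)$, hence the coefficient $VH_t-J_t$ is a fixed real number at the moment of decision.

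The first step is a simple case split. If $J_t>VH_t$, then $VH_t-J_t<0$, and since $U_t$ is restricted to $\{0,1\}$, the minimum is attained at $U_t=1$, giving objective value $VH_t-J_t<0$, strictly smaller than the value $0$ obtained by $U_t=0$. If $J_t<VH_t$, the coefficient is strictly positive and the minimum is attained at $U_t=0$. If $J_t=VH_t$, both choices give objective value $0$, and by convention we set $U_t=0$ (which is why the rule in (\ref{eqn:u-single}) uses a strict inequality for the ``update'' case and a non-strict one for the ``no update'' case). Combining the three cases yields exactly the threshold policy (\ref{eqn:u-single}).

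There is essentially no obstacle here; the proposition is just naming and packaging the argmin of a scalar linear function on $\{0,1\}$. The only thing worth a sentence of justification is that the update index $J_t$ and the virtual-queue term $VH_t$ are both measurable with respect to the conditioning $\sigma$-algebra $\sigma(Q_t,\omega_{t+1},H_t)$, so the pointwise minimization of the right-hand side of (\ref{eqn:right-single}) is legitimate and coincides with minimizing the conditional expectation of $L_{t+1}-L_t+f_t$. With that remark, the proof is a two-line case analysis and no further calculation is needed.
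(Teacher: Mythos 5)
Your proposal is correct and matches the paper's (implicit) reasoning exactly: the paper states this proposition without a written proof precisely because it is the argmin of a scalar linear function of $U_t$ over $\{0,1\}$, decided by the sign of the coefficient $VH_t - J_t$. Your case analysis, including the tie-breaking convention at $J_t = VH_t$ and the remark that $J_t$ and $VH_t$ are known given the conditioning information, is all that is needed.
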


The update index is equivalent to the reduction of expected future UoI with a status transmission. According to Eq. (\ref{eqn:u-single}), if the update index is larger than $V$ times virtual queue length $H_t$, there will be a status transmission. In this case, update index is able to measure the necessity of status transmission consider current context, status estimation error, and channel condition, while the virtual queue acts as a dynamic threshold to ensure that the average status update frequency constraint is satisfied. 

\begin{theorem}
\label{thm:single}
Under policy (\ref{eqn:u-single}), the average status update frequency constraint (\ref{eqn:con2}) is satisfied, and the average UoI is upper bounded as
\begin{eqnarray}
\limsup_{T\to\infty}\frac{1}{T}\sum_{t=0}^{T-1}\omega_tQ_t^2\leq\frac{\bar{\omega}\sigma^2}{p\rho} + \frac{1}{2}V. 
\end{eqnarray}
\end{theorem}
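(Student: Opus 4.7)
The plan is to combine the pointwise optimality of the proposed policy with the drift-plus-penalty bound of Lemma \ref{lemma:bound-single}. First I would observe that, by construction, the scheme in (\ref{eqn:u-single}) is the minimizer over $U_t\in\{0,1\}$ of the right-hand side of (\ref{eqn:right-single}) at every time slot. Therefore, for the choice $\theta=\bar\omega(1-p\rho)/(p\rho)$ (i.e., $\theta_0=0$), the conditional drift-plus-penalty under the proposed policy is upper bounded by the value attained under \emph{any} other feasible policy plugged into the same right-hand side. In particular, substituting the stationary randomized policy $U_t\sim\mathrm{Bern}(\rho)$, which is independent of $(Q_t,\omega_{t+1},H_t)$ — exactly the derivation leading to Eq. (\ref{eqn:bound-final-single}) — yields
\begin{equation*}
\Delta_t + \E{f_t\mid Q_t,\omega_{t+1},H_t} \;\leq\; \frac{\bar\omega\sigma^2}{p\rho} + \tfrac{1}{2}V,
\end{equation*}
and taking the outer expectation gives $\E{L_{t+1}-L_t+f_t}\leq C$ with $C=\frac{\bar\omega\sigma^2}{p\rho}+\tfrac{1}{2}V$.

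Second, I would invoke Lemma \ref{lemma:bound-single} with the penalty $f_t=\omega_{t+1}Q_{t+1}^2\geq 0$, whose conditional expectation is trivially bounded below by $0$; the initial Lyapunov function $\E{L_0}$ is finite under the natural assumption that $H_0$ and $Q_0$ are finite. The lemma then simultaneously delivers two conclusions: mean-rate stability of the virtual queue $H_t$, and the time-average bound
\begin{equation*}
\limsup_{T\to\infty}\frac{1}{T}\sum_{t=0}^{T-1}\E{\omega_{t+1}Q_{t+1}^2} \;\leq\; \frac{\bar\omega\sigma^2}{p\rho} + \tfrac{1}{2}V.
\end{equation*}
An index shift $t\mapsto t-1$ (absorbing the vanishing $O(1/T)$ boundary terms into the $\limsup$) converts this into the desired bound on $\limsup_{T\to\infty}\frac{1}{T}\sum_{t=0}^{T-1}\E{\omega_t Q_t^2}$. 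Mean-rate stability of $H_t$, in turn, yields the average updating frequency constraint (\ref{eqn:con2}) via Lemma \ref{lemma:mrs}.

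The only subtle step is the first one: justifying that the pointwise minimizer of an upper bound on the drift-plus-penalty inherits an upper bound obtained by evaluating that same expression at an alternative (randomized) feasible action. This is essentially the standard opportunistic-comparison argument from Lyapunov optimization and goes through because (\ref{eqn:right-single}) is an upper bound valid for every admissible $U_t$, so minimizing its right-hand side cannot produce a larger drift-plus-penalty than any competing policy plugged into the same right-hand side. The remaining bookkeeping — verifying the hypotheses of Lemma \ref{lemma:bound-single} and performing the index shift — is routine.
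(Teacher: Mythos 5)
Your proof is correct and follows essentially the same route as the paper: apply the drift-plus-penalty bound (\ref{eqn:bound-final-single}) with $\theta_0=0$ to the proposed policy via comparison with the stationary randomized policy, then invoke Lemma \ref{lemma:bound-single} to obtain both mean-rate stability (hence the constraint via Lemma \ref{lemma:mrs}) and the time-average bound. You are in fact more careful than the paper's one-line proof, since you make explicit the opportunistic-comparison step (that the pointwise minimizer of the right-hand side of (\ref{eqn:right-single}) inherits the randomized policy's bound) and the index shift from $\omega_{t+1}Q_{t+1}^2$ to $\omega_t Q_t^2$, both of which the paper leaves implicit.
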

\begin{proof}
Take expectation over both sides of (\ref{eqn:bound-final-single}). By the definition of penalty and Lemma \ref{lemma:bound-single}, the theorem is proved. 
\end{proof}

Theorem \ref{thm:single} introduces the UoI upper bound of a status update system with policy (\ref{eqn:u-single}). By Theorem \ref{thm:single}, the smaller parameter $V$ is, the lower the bound is. However, when parameter $V$ is small, policy (\ref{eqn:u-single}) converges slowly, which could lead to performance degradation in a shorter period. 

\section{Numerical Results}
In the simulations, the fusion center receives updates from a terminal whose status evolves as a Wiener process, such that the increments of estimation error are i.i.d. standard Gaussian random variables. 

\begin{figure}[htbp]
\centering
\includegraphics[width=3.4in]{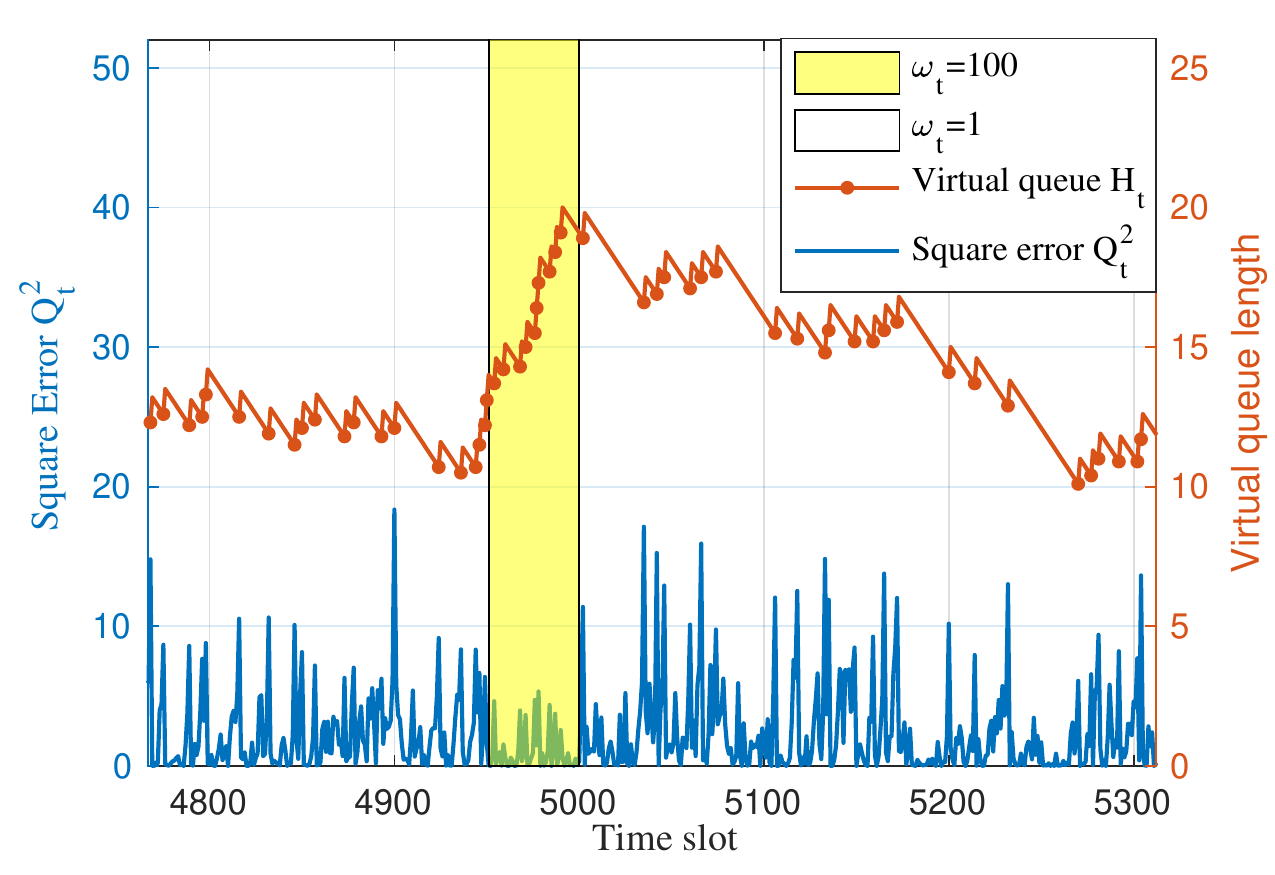}
\caption{Sample path of virtual queue length and squared estimation error. }
\label{fig:ss}
\end{figure}

	In Fig. \ref{fig:ss}, the evolution of virtual queue length and squared estimation error is plotted. The terminal is allowed to transmit in one fourth of the time, i.e., $\rho = 0.25$, with a transmission success probability of 1 for simplicity. Parameter $V$ is set to 1. The context-aware weight is set to $1$ in the former $4950$ time slots, and set to $100$ in the latter $50$ time slots (highlighted in yellow). The difference in context-aware weight indicates that the latter $50$ time slots are the critical period in which the timeliness of status information is faced with a highly strict standard. The solid red dots indicate a status transmission at the corresponding time slots. As shown in the figure, the virtual queue length increases over the critical period, meaning the terminal is updating more frequently than the average frequency to improve the timeliness of status information. Accordingly, the squared estimation error is much lower over the critical period than the ordinary period. 
%
%
%

In the second set of simulations, transmission success probability is set to $0.8$. The context-aware weight at each time slot is i.i.d. with probability $0.01$ being $100$ and probability $0.99$ being $1$. Since the increment of estimation error is i.i.d. Gaussian random variable, and the context-aware weight $\omega_t$ are i.i.d., by denoting system state as $(Q_t, \omega_t, \omega_{t+1})$, the system is Markovian. Therefore, with relative value iteration \cite{bertsecas}, we are able to obtain the Pareto optimal tradeoff between the average update frequency $\rho$ and average UoI. Similarly, we obtain the AoI-optimal update scheme under each update frequency constraint. Fig. \ref{fig:single-perf} shows the average UoI under AoI-optimal scheme, UoI-optimal scheme, and the proposed adaptive scheme. Although the adaptive scheme has a simple structure and low computation complexity, it achieves a near-optimal performance, while the AoI-optimal scheme yields a much higher UoI. 

\begin{figure}[htbp]
\centering
\includegraphics[width=2.8in]{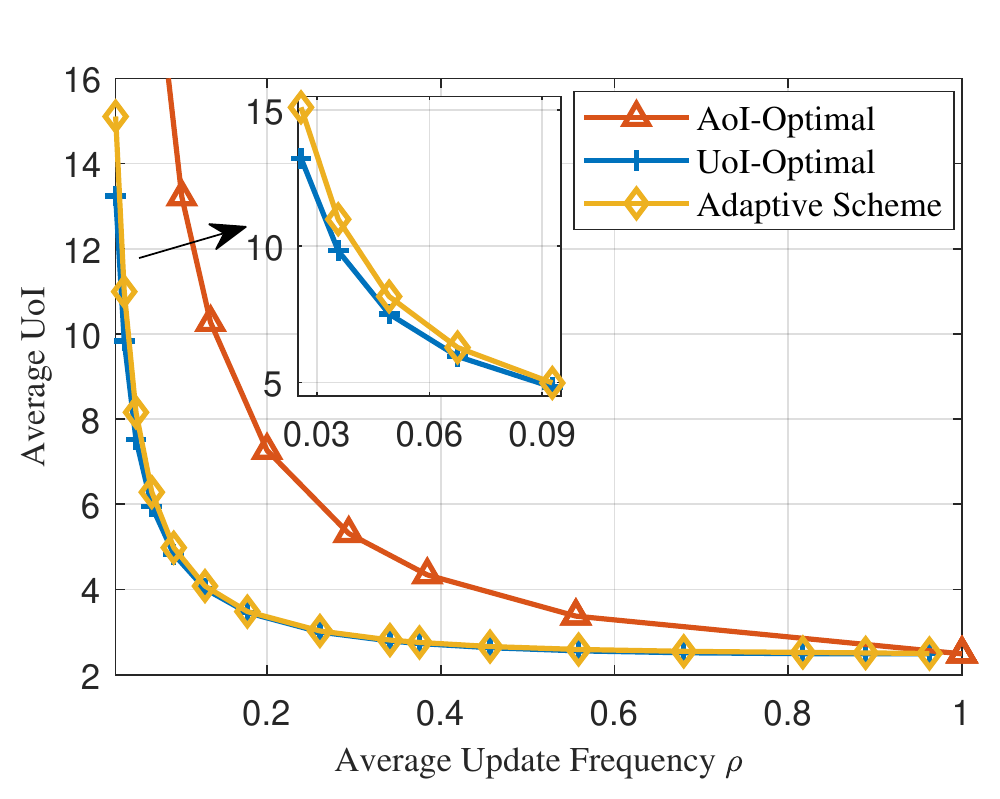}
\caption{UoI under the adaptive, UoI-optimal and AoI-optimal scheme. }
\label{fig:single-perf}
\end{figure}

\section{Conclusions}

This paper introduces the concept of UoI to characterize the timeliness of a status update system. The UoI embeds the context information in status updates as well as the non-uniform status evolution. The status update scheme under an average update frequency constraint is proposed. Simulation results show that the proposed adaptive scheme is able to adapt to the context and status information in the system, and achieve a significant improvement over existing AoI-based schemes on the timeliness of status updates. 

\begin{appendices}
\section{Proof of Proposition 1}
Control actions are made to minimize $$\sum_{t=0}^{T-1}\E{\omega_{t}\left(a\hat{x}_{t-1} + bv_t + r_t - y_t\right)^2}. $$
Taking derivative with respect to $v_t$, the objective function of the problem becomes $2\E{\omega_{t}}\left(a\hat{x}_{t-1} + bv_t - y_t\right)$. Therefore, the minimum weighted squared difference is achieved when 
$v^*_t = \frac{y_t - a\hat{x}_{t-1}}{b}$. 
The objective function of (\ref{program:LQC}) becomes
\begin{eqnarray*}
&~&\limsup_{T\to\infty}\frac{1}{T}\sum_{t=0}^{T-1}\E{\omega_{t}\left(x_t - y_t\right)^2}\\
&=&a^2\limsup_{T\to\infty}\frac{1}{T}\sum_{t=0}^{T-1}\E{\omega_{t}\left(x_{t-1} - \hat{x}_{t-1}\right)^2}\\
&~& + \lim_{T\to\infty}\frac{1}{T}\sum_{t=0}^{T-1}\E{\omega_{t}}\E{r^2}.
\end{eqnarray*}
Since the last term is constant, the proposition is proved.

\section{Proof for Lemma 1}
The definition of virtual queue $H_t$ yields $H_T \geq H_0 - T\rho + \sum_{\tau=0}^{T-1}U_\tau$. Dividing both sides by $T$ and taking limit yields
\begin{eqnarray*}
\lim_{T\to\infty}\frac{\E{H_T} - H_0}{T} &\geq& - \rho + \lim_{T\to\infty}\frac{1}{T}\sum_{\tau=0}^{T-1} \E{U_\tau}.
\end{eqnarray*}
Since the left-hand side equals to zero, the lemma is proved. 

\section{Proof for Lemma 2}
\label{sec:lemma2}
Summing up over $t\in\{0,1,2\cdots,T-1\}$, we get
\begin{eqnarray}
\label{proof:3-1}
\E{L_T} - \E{L_0} + \sum_{t=0}^{T-1}\E{f_t} \leq CT.
\end{eqnarray}
First, we prove that the virtual queue is mean rate stable. By $\E{f_t}\geq \bar f_\mathrm{min}$, we get
$\E{L_T} \leq (C-\bar f_\mathrm{min})T + \E{L_0}$. By the definition of Lynapunov function, we have 
$$\frac{1}{2}V\E{H^2_T} \leq (C-\bar f_\mathrm{min})T + \E{L_0}.$$
Since $\E{H^2_T}\geq\E{H_T}^2$, we obtain 
$$\E{H_T} \leq \sqrt{\frac{2}{V}\left((C-\bar f_\mathrm{min})T + \E{L_0}\right)}.$$
Dividing both sides of the above inequality by $T$ and let $T$ goes to $\infty$, we get $\limsup_{T\to\infty}\frac{\E{H_T}}{T} = 0$.

Next, we prove the upper bound in (\ref{eqn:bound-single}). Dividing (\ref{proof:3-1}) by $T$ and letting $T\to\infty$ yields
\begin{eqnarray*}
\limsup_{T\to\infty}\frac{1}{T}\E{L_T} + \limsup_{T\to\infty}\frac{1}{T}\sum_{t=0}^{T-1}\E{f_t} \leq C.
\end{eqnarray*}
By $L_t\geq0$, the lemma is proved. 

\section{Proof for Lemma 3}
\label{sec:lemma3}
By the dynamic function of $Q_t$ and $U_t,S_t\in\left\{0,1\right\}$, we get
\begin{equation}
Q_{t+1}^2-Q_{t}^2 = A_t^2 + 2\left(1-U_tS_t\right)A_tQ_t - U_tS_tQ_t^2. \label{eqn:drift-q}
\end{equation}
By the definition of virtual queue, we get
\begin{equation}
H_{t+1}^2-H_{t}^2 \leq 1 + 2\left( - \rho + U_t\right)H_t. \label{eqn:drift-h}
\end{equation}
Since $\E{A_t}=0$ and by the assumption that $A_t$ is independent to $Q_t, \omega_{t+1}, H_t$, substituting Eq. (\ref{eqn:drift-q}) and Eq. (\ref{eqn:drift-h}) into the definition of Lyapunov drift yields
\begin{equation}
\label{ineqn:proof1}
\Delta_t \leq \theta\sigma^2 + \frac{1}{2}V - V\rho H_t + (VH_t - \theta pQ_t^2)\E{U_t|Q_t, \omega_{t+1}, H_t}. 
\end{equation}
By adding expected penalty $\E{f_t|Q_t, \omega_{t+1}, H_t}$ to both sides of Eq. (\ref{ineqn:proof1}), the lemma is hereby proved. 

\end{appendices}

\bibliographystyle{ieeetr}

\end{document}